\documentclass[conference]{IEEEtran}
\IEEEoverridecommandlockouts
\usepackage{booktabs}
\usepackage{latexsym}
\usepackage{cite}
\usepackage{graphicx}
\usepackage{eso-pic}
\usepackage{stfloats}
\usepackage{url}
\usepackage[table]{xcolor}
\definecolor{LLightBlue}{rgb}{0.9,0.95,1}
\usepackage{multirow}
\usepackage{amsmath}
\usepackage{amssymb}
\usepackage{amsthm}
\usepackage[ruled,vlined]{algorithm2e}
\newcommand{\modelname}{\textsc{HyperDoc}}
\newtheorem{theorem}{Theorem}
\newtheorem{definition}{Definition}

\begin{document}
\AddToShipoutPictureFG*{%
  \AtPageUpperLeft{%
    \put(\LenToUnit{0.55in},-\LenToUnit{0.24in}){%
      \parbox[t]{7.4in}{\centering\fontsize{8.5}{9.5}\selectfont
        \copyright~2027 IEEE. Personal use of this material is permitted.
        Permission from IEEE must be obtained for all other uses, in any current or future media,
        including reprinting/republishing this material for advertising or promotional purposes,
        creating new collective works, for resale or redistribution to servers or lists,
        or reuse of any copyrighted component of this work in other works.}%
    }%
  }%
}
\title{Hypergraph-Enhanced Evidence Selection for Long-Context Multimodal Document QA}

\author{
    \IEEEauthorblockN{
        Wenxin Zhao\IEEEauthorrefmark{1}, 
        Jiachuan Wang\IEEEauthorrefmark{2}\thanks{\IEEEauthorrefmark{2}Jiachuan Wang is the corresponding author.},
        Xinyi Zhu\IEEEauthorrefmark{1},
        Jiake Ge\IEEEauthorrefmark{3},
        Haoyang Li\IEEEauthorrefmark{4},
        Yongqi Zhang\IEEEauthorrefmark{1},
        Shuangyin Li\IEEEauthorrefmark{5},
        Lei Chen\IEEEauthorrefmark{1}\IEEEauthorrefmark{6}
    }
        
    \IEEEauthorblockA{
        \fontsize{10}{10}\selectfont{\IEEEauthorrefmark{1}The Hong Kong University of Science and Technology (Guangzhou), China} \\ 	
        \fontsize{9}{9}\selectfont{\ttfamily\upshape wzhao815@connect.hkust-gz.edu.cn, xzhu683@connect.hkust-gz.edu.cn, yongqizhang@hkust-gz.edu.cn} \\
        \fontsize{10}{10}\selectfont{\IEEEauthorrefmark{2}University of Tsukuba, Japan} \quad
        \fontsize{9}{9}\selectfont{\ttfamily\upshape wangjc@slis.tsukuba.ac.jp} \\
        \fontsize{10}{10}\selectfont{\IEEEauthorrefmark{3}Tianjin University, China} \quad \fontsize{9}{9}\selectfont{\ttfamily\upshape gejiake@tju.edu.cn} \\
        \fontsize{10}{10}\selectfont{\IEEEauthorrefmark{4}The Hong Kong Polytechnic University, Hong Kong SAR, China} \quad \fontsize{9}{9}\selectfont{\ttfamily\upshape haoyang-comp.li@polyu.edu.hk} \\
        \fontsize{10}{10}\selectfont{\IEEEauthorrefmark{5}South China Normal University, China} \quad \fontsize{9}{9}\selectfont{\ttfamily\upshape shuangyinli@scnu.edu.cn} \\
        \fontsize{10}{10}\selectfont{\IEEEauthorrefmark{6}The Hong Kong University of Science and Technology, Hong Kong SAR, China} \quad
        \fontsize{9}{9}\selectfont{\ttfamily\upshape leichen@cse.ust.hk} \\
    }
}

\maketitle

\begin{abstract}
Long-context multimodal document question answering (QA) has become an important yet challenging task for modern vision-language models (VLMs), as it requires reasoning over evidence dispersed across pages and modalities within limited input budgets. Current pipelines adopt retrieval-augmented generation (RAG) to select multimodal evidence for downstream VLMs. However, existing systems mainly rely on static selection rules over fixed-granularity retrieval units such as pages or blocks, resulting in information loss during retrieval and suboptimal performance in downstream reasoning. 
To alleviate these issues, we formulate the core task as budget-aware evidence selection over structured document evidence and propose \modelname{}, a document RAG framework that constructs compact yet sufficient evidence while preserving document relations. Specifically, \modelname{} constructs a document hypergraph to explicitly capture structural, logical, and cross-modal relationships among multimodal document elements. Over this structured evidence space, \modelname{} performs adaptive cost-aware hyperedge selection based on query intent and input budget. Extensive experiments on three representative benchmarks demonstrate that \modelname{} outperforms various types of state-of-the-art baselines.
\end{abstract}

\begin{IEEEkeywords}
Multimodal and heterogeneous data, budget-aware evidence selection, hypergraph, vision-language models
\end{IEEEkeywords}

\section{Introduction}
\label{sec:intro}
\vspace{-0.8mm}
In today's information landscape, a substantial portion of real-world information is stored in unstructured data~\cite{forbes}, with PDF documents serving as a major medium for preserving and communicating knowledge in domains such as scientific research, financial reporting, and technical support. Recent advances in large language models (LLMs) have improved their ability to query, extract, and reason over document content, but most LLM-powered systems can only process text representations parsed from documents~\cite{lin2024accurateefficientdocumentanalytics,linQueryingTemplatizedDocument2025,shankarDocETLAgenticQuery2025,DBLP:journals/pvldb/XiaoHWB25,10.1145/3786702}. However, real-world documents also encode meaningful information in multimodal content such as images and layouts~\cite{bhattacharyyaInformationExtractionVisually2025c, 10.1145/3626246.3654732}. Fortunately, with the rise of vision-language models (VLMs), recent work has expanded document QA toward visually rich, multi-page settings, where textual, visual, and structural evidence must be jointly preserved and integrated.

However, as a budget-aware evidence construction problem, long-context multimodal document QA is still challenging in two aspects.
First, although recent VLMs have demonstrated strong capabilities in straightforward QA tasks for single-page or short-form documents~\cite{DBLP:journals/corr/abs-2505-09388}, scaling such an end-to-end approach to entire long documents is often impractical. The finite context windows of VLMs, together with the memory and computation costs introduced by long document pages, pose budget limitations for evidence selection.
Second, evidence pieces distributed across pages and modalities also complicate evidence construction, as they must be selected and combined to support an answer. As shown in Figure~\ref{fig:preliminary_analysis}, a significant portion of questions in MMLongBench-Doc~\cite{maMMLongBenchDocBenchmarkingLongcontext2024b} and LongDocURL~\cite{dengLongDocURLComprehensiveMultimodal2025} require evidence spanning multiple pages, and the evidence is often distributed across different modalities, including text, layout, charts, tables, and images.

\begin{figure}[t]
    \centering
    \includegraphics[width=0.95\linewidth]{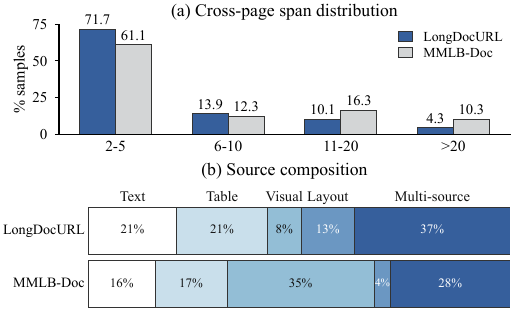}
    \vspace{-3mm}
    \caption{Evidence analysis on MMLongBench-Doc and LongDocURL.}
    \label{fig:preliminary_analysis}
    \vspace{-3mm}
\end{figure}

Existing methods mainly adopt multimodal Retrieval-Augmented Generation (RAG)~\cite{yu2025visragvisionbasedretrievalaugmentedgeneration, DBLP:journals/pvldb/WangWKGXC24,10.1145/3802111, ye-etal-2024-r2ag} to mitigate these difficulties. Instead of feeding the entire long document into a VLM, multimodal RAG first retrieves a query-relevant evidence subset, thereby reducing the pressure from limited context windows and inference budgets.
Specifically, to handle evidence distributed across pages and modalities, existing pipelines first convert each document into retrievable units for subsequent retrieval. They typically either use page images as natural retrieval units~\cite{suriVisDoMMultiDocumentQA2025b,hanMDocAgentMultiModalMultiAgent2025b,wuMoLoRAGBootstrappingDocument2025,choM3DocRAGMultimodalRetrieval2024b,DBLP:conf/iclr/ChenZZYDGRC025} or extract blocks as finer-grained chunks using Optical Character Recognition (OCR) techniques~\cite{gongMHierRAGMultiModalRAG2025,dongDocResearcherUnifiedSystem2025b}. However, this can break the hierarchical structure and logical relations within the document.
Then, during retrieval, most methods select the top-$K$ multimodal units using similarity scores produced by vision-based encoders such as ColPali~\cite{DBLP:journals/corr/abs-2407-01449} and CLIP~\cite{dblp:journals/corr/abs-2103-00020}. Even graph-based methods such as MoLoRAG~\cite{wuMoLoRAGBootstrappingDocument2025} ultimately select a fixed top-$K$ set of pages after traversal.
However, such a rigid cutoff is inherently suboptimal, as a large $K$ increases input cost and admits noisy evidence, while a small $K$ risks discarding evidence required for reasoning.
Thus, existing approaches still leave two issues insufficiently addressed for more adaptive and robust document evidence construction.

\begin{figure}[t]
    \centering
    \includegraphics[width=\linewidth]{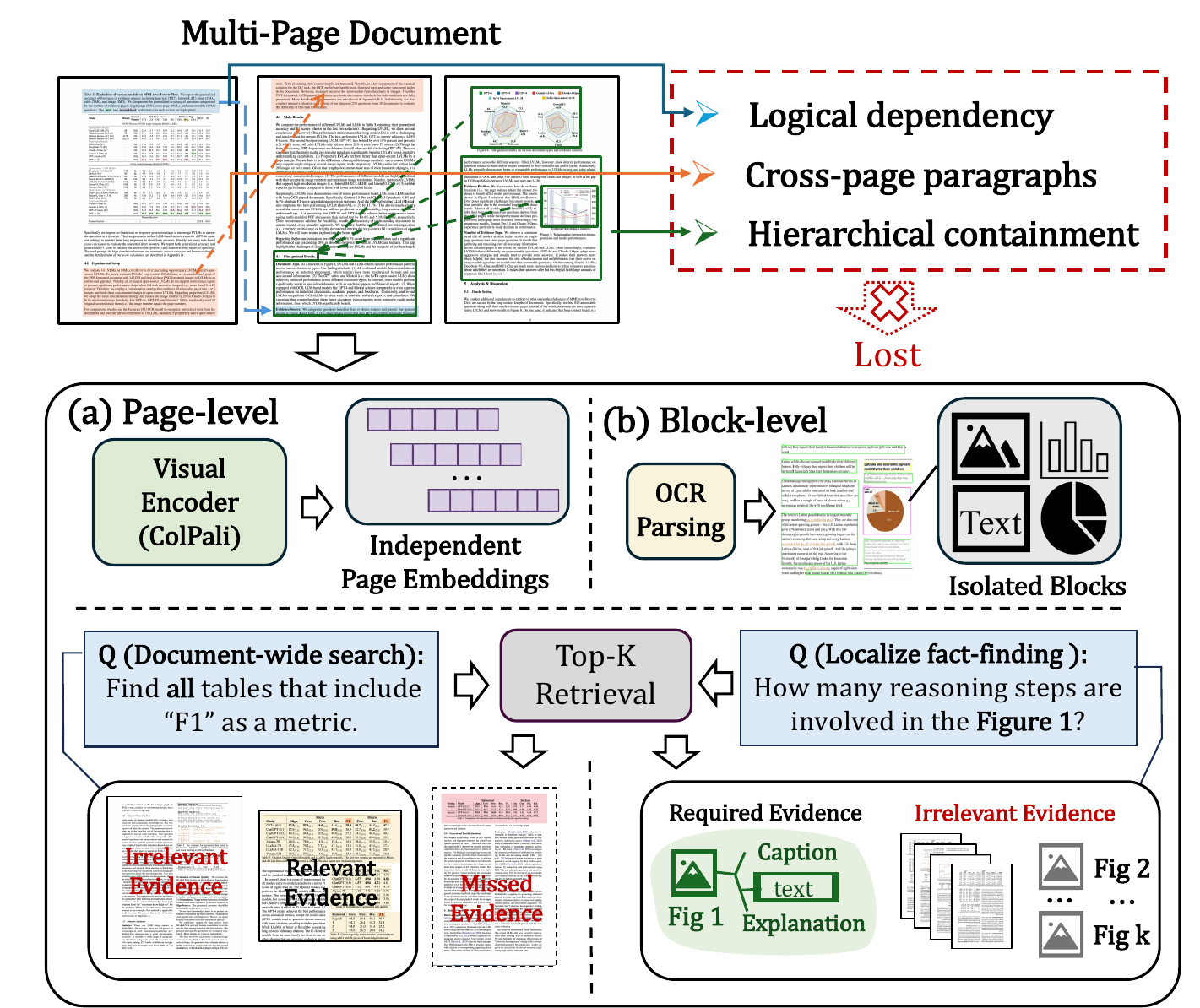}
    \vspace{-3mm}
    \caption{Limitations of existing document RAG pipelines.}
    \label{fig:motivation}
    \vspace{-3mm}
\end{figure}

The first challenge is the \textbf{\textit{relationally isolated evidence construction}} problem.
Multimodal document QA inherently requires reasoning over key information conveyed through structural and logical continuities, such as hierarchical containment and cross-page paragraphs, as shown in Figure~\ref{fig:motivation}.
For example, a query such as ``Which sections would you refer to for understanding the tasks involved in basic domestic cooking?’’ requires not only similarity-relevant content, but also document hierarchy information to identify the corresponding section titles.
Without explicitly capturing these relations from documents, existing methods often select isolated pages or blocks based on query similarity, which can lead to fragmented reasoning and information loss. Alternatively, they rely on LLM-generated summaries to recover missing connections, introducing additional computational cost~\cite{dongDocResearcherUnifiedSystem2025b}.
Such relations are difficult to extract because they often involve multiple heterogeneous elements jointly~\cite{Sarkhel2021ImprovingIE}, including section hierarchy and related multimodal context, while pairwise links represent such contexts only indirectly through multiple separate relations~\cite{yang2023hgmatchmatchbyhyperedgeapproachsubgraph}.
Thus, explicitly modeling document relations during evidence construction remains a key challenge for multi-page reasoning.

The second challenge is the \textbf{\textit{intent-agnostic evidence selection}} problem. Given a real-world document, the evidence required for different queries is highly heterogeneous, ranging from broad document-wide retrieval to localized fact retrieval as shown in Figure~\ref{fig:motivation}. However, existing approaches typically rely on a static, intent-agnostic evidence selection strategy such as top-$K$ similarity selection, making it difficult to adapt the evidence granularity and selection scope to the actual demands of different query intents.
Moreover, the retrieved evidence units are typically passed to the VLM without query-aligned organization, leaving the model itself to implicitly determine which evidence is most useful and how different pieces should be integrated.

In this paper, we address the aforementioned challenges by proposing a hypergraph-based document RAG framework \modelname{} designed to select evidence that is both query-relevant and relation-preserving under a limited budget.
First, we construct a document hypergraph offline to capture document structure and cross-modal contextual dependencies within each document.
Second, to mitigate the intent-agnostic evidence selection problem, we formulate a query-aware hyperedge selection problem at query time. The selected evidence is then pruned and linearized into a compact query-aligned context for downstream VLM reasoning.
The overall framework is shown in Figure~\ref{fig:framework}.

We summarize our main contributions as follows:
\begin{itemize}
\item We propose \modelname{}, a hypergraph-based document RAG framework that unifies structured evidence construction, intent-aware routing, and cost-aware selection.
\item To address relationally isolated evidence construction, we leverage a document hypergraph that explicitly models structural containment and cross-page contextual associations beyond fragmented page-level representations.
\item To address intent-agnostic evidence selection, we formulate query-aware hyperedge selection as a coverage maximization problem, prove its NP-hardness, and propose a cost-aware greedy algorithm that adapts selection to each query under limited input budgets.
\item Extensive experiments on representative benchmarks show that \modelname{} outperforms the state-of-the-art baselines across all backbone settings. Additional efficiency evaluations, ablation studies, and parameter sensitivity analyses further demonstrate its effectiveness, efficiency, and robustness.
\end{itemize}

\section{Preliminaries and Task Formulation}
\label{sec:problem}
\vspace{-0.8mm}
In this section, we introduce the basic notation used throughout the paper and define the evidence construction task in this work.

\begin{definition}[Hypergraph]
A hypergraph is denoted as $G=(V,\mathcal{E})$, where $V$ is a set of vertices and $\mathcal{E}$ is a set of hyperedges. Unlike a standard graph edge that connects two vertices, each hyperedge $e\in\mathcal{E}$ can connect an arbitrary subset of vertices, i.e., $e\subseteq V$.
\end{definition}

\begin{definition}[Document Hypergraph]
Given a long multimodal document $D$, we represent it as a document
hypergraph $\mathcal{G}_D=(V_D,\mathcal{E}_D)$. The node set $V_D$ contains
multi-granular document units, including page nodes and layout-aware
block nodes. Each hyperedge $e\in\mathcal{E}_D$ connects a set of units that
jointly form a document-native evidence relation, such as page-level containment,
section-level membership, or cross-modal context linking.
\end{definition}

\noindent\textbf{Task Formulation.}
Given a long multimodal document $D$, a user query $q$, and a VLM input budget $B_{\mathrm{vlm}}$, the task is to construct a compact VLM-ready multimodal input $E_q$ from $D$, consisting of serialized textual evidence and selected visual inputs, such that a VLM can answer $q$ using $E_q$ while satisfying the budget constraint $\mathrm{Cost}(E_q)\le B_{\mathrm{vlm}}$.

\section{Offline Document Hypergraph Construction}

\label{sec:construction}

\begin{figure*}[t]
    \centering
    \includegraphics[width=0.9\textwidth]{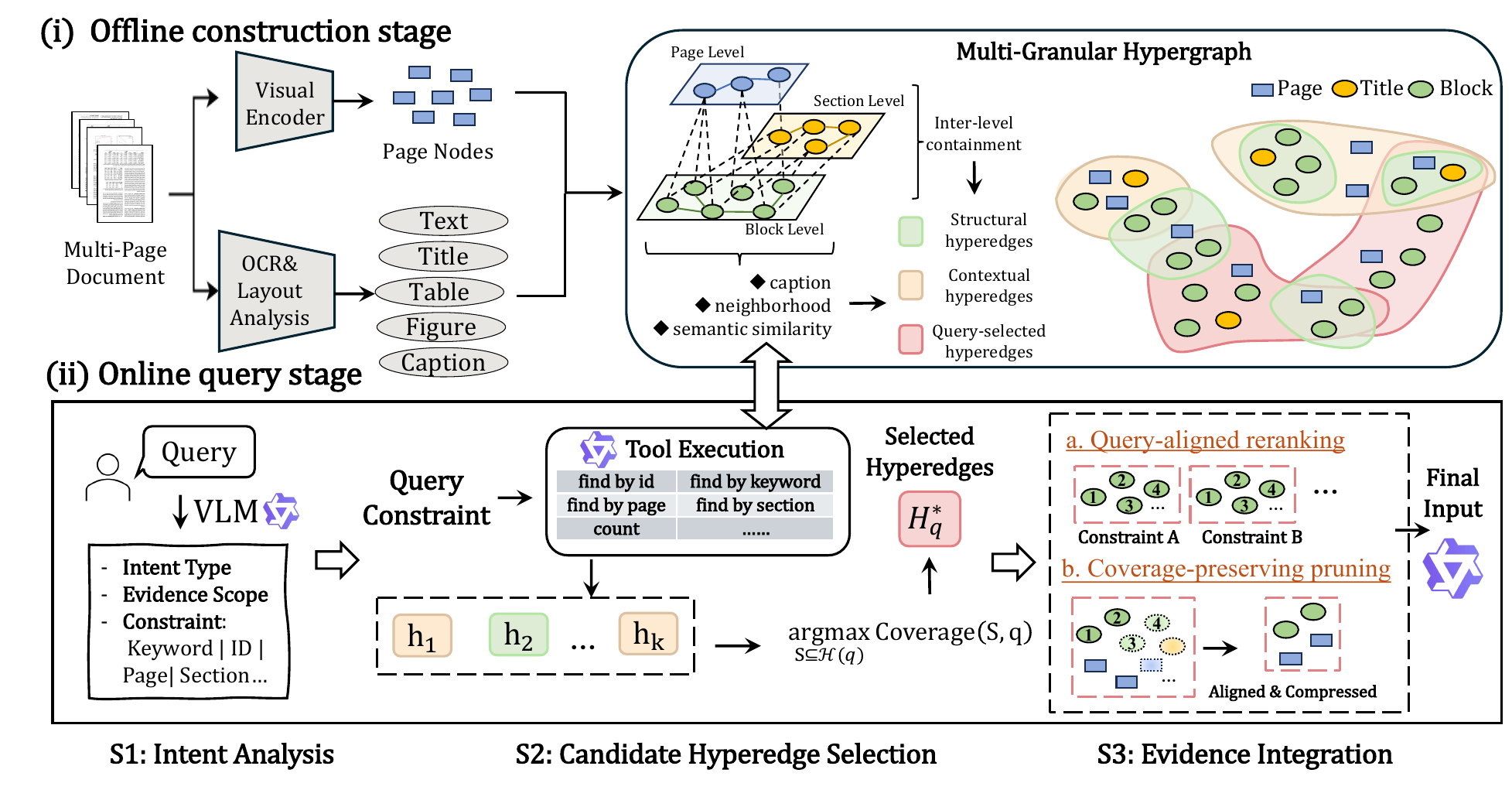}
    \vspace{-2mm}
    \caption{Overview of \modelname{}. The framework consists of two stages. In the offline stage, we construct a document hypergraph over multi-granular document elements using layout structure and semantic relations, so as to capture structural containment and cross-page contextual associations. In the online stage, we perform query-aware hyperedge selection to construct concise yet sufficient evidence for VLM reasoning.}
    \label{fig:framework}
    \vspace{-4mm}
\end{figure*}

In this section, we aim to address the \textbf{relationally isolated evidence
construction problem} by representing and preserving key relations within
a multimodal document as a document hypergraph.
Building such a document hypergraph requires resolving three representation challenges. First, multimodal documents contain heterogeneous evidence at different granularities. We parse these elements into unified hypergraph nodes while preserving both visual layout and textual content in Section~\ref{sec:document_eles}.
Second, constructing contextual relations requires identifying which heterogeneous elements jointly form a coherent evidence context. We address this by building contextual hyperedges in Section~\ref{sec:contextual_hyperedges}.
Third, document-level structural information, such as page containment and section membership, should be represented as selectable scopes that can guide query-time evidence selection. We therefore extract these relations in Section~\ref{sec:hyperedges}.

\subsection{Multimodal Document Parsing}
\label{sec:document_eles}
\vspace{-0.8mm}
Given a collection of multimodal documents $\mathcal{D}=\{D_1,\ldots,D_N\}$, we parse each document into pages and fine-grained blocks.
For a document $D=\{p_1,\ldots,p_n\}$, each page $p_j$ is rendered into an image $I_j$. The parsed page collection is represented as $P_D=\{(j,I_j)\}_{j=1}^{n}$, where $j$ is the page index. In implementation, page identifiers are normalized to zero-based indices. The page images retain visual layout, figures, tables, and other formatting cues, and later serve as coarse visual representations for candidate proposal and visual evidence rendering.

To capture finer-grained evidence, we apply a multimodal document parser, such as PaddleOCR~\cite{cui2025paddleocr30technicalreport}, to each page image and obtain a set of layout-aware blocks $B_D=\{b_1,\ldots,b_m\}$.
Each block is represented as
\begin{equation}
b_i=\langle \mathrm{id}_i,\tau_i,\mathrm{pg}_i,\mathrm{bbox}_i,\mathrm{text}_i\rangle,
\label{eq:block-representation}
\end{equation}
where $\mathrm{id}_i$ is the block identifier, $\tau_i$ denotes the element type, $\mathrm{pg}_i$ is the page index of the block, $\mathrm{bbox}_i$ specifies its page location, and $\mathrm{text}_i$ is the recognized textual content when available. The element types include text, title, list, figure, table, and caption blocks.
Among these parsed blocks, non-textual and layout-dependent elements are retained in their original visual form rather than being represented only by serialized text. Although elements such as tables and charts can often be converted into text, their spatial layout, graphical marks, and tabular structures may carry essential evidence that is difficult to preserve through serialization alone. We define the visual anchor set as
\begin{equation}
A_D=\{b_i\in B_D \mid \tau_i\in\mathcal{T}_{\mathrm{vis}}\},
\label{eq:visual-anchor-set}
\end{equation}
where $\mathcal{T}_{\mathrm{vis}}$ denotes visual or layout-dependent element types such as figure, table, chart, and diagram.
These parsed elements later serve as the vertices of
the document hypergraph and provide the basis for extracting contextual
and structural hyperedges.

\subsection{Visual-Centered Contextual Hyperedge Extraction}
\label{sec:contextual_hyperedges}
\vspace{-0.8mm}
In visually rich documents, visual elements such as figures and tables are often interpreted through multimodal context jointly formed by captions, titles, surrounding explanations, and referenced text.
To model these connections between multimodal evidence, we treat figures and tables as visual anchors used to gather their related textual context and extract the visual-centered n-ary relations as contextual hyperedges.

Given the visual anchor set $A_D$, we construct one contextual hyperedge
for each anchor $a_i\in A_D$. Specifically, we associate each visual
anchor with a caption set $\mathcal{C}_i^{cap}$, a section-title set $\mathcal{C}_i^{sec}$, and a set of related text blocks $R_i$.
Captions are selected from caption-type blocks using spatial proximity
and layout alignment with the anchor, while section titles are selected from preceding title blocks in reading order. Related text blocks are selected by
multimodal similarity, allowing us to recover explanatory content that may
not be adjacent to the visual element or explicitly marked as a caption.
Here, $\mathcal{C}_i^{cap}$ and $\mathcal{C}_i^{sec}$ contain the selected caption and section-title blocks, respectively, and are empty when no such block is available.

Let $\mathbf{v}_i$ be the visual embedding of anchor $a_i$, and let $\mathbf{t}_j$ be the textual embedding of a candidate text block $b_j$. We compute their relevance as
\[
\mathrm{sim}(a_i,b_j)=
\frac{\mathbf{v}_i^\top \mathbf{t}_j}
     {\|\mathbf{v}_i\|\,\|\mathbf{t}_j\|}.
\]
The top-ranked text blocks within a local page window are retained as $R_i$.

The visual-centered contextual hyperedge associated with anchor $a_i$ is defined as
\begin{equation}
  e_i^{ctx}
  =
  \{a_i\}
  \cup
  \mathcal{C}_i^{cap}
  \cup
  \mathcal{C}_i^{sec}
  \cup
  R_i ,
  \label{eq:contextual-hyperedge}
\end{equation}
where $e_i^{ctx}\in\mathcal{E}_D^{ctx}$. The set $\mathcal{E}_D^{ctx}$ contains all visual-centered contextual hyperedges extracted from document $D$.
Each contextual hyperedge binds a visual anchor with its related textual context, so that query-time selection can retrieve the visual anchor together with its supporting context as an indexable evidence group.

\subsection{Structural Hyperedge Extraction}
\label{sec:hyperedges}
\vspace{-0.8mm}

In this subsection, we extract structural hyperedges that expose hierarchical scopes over fine-grained evidence units.
Beyond multimodal connections among fine-grained blocks, documents naturally exhibit structural relations such as page containment, section hierarchy, and cross-page continuity.
At query time, a single evidence element may serve multiple structural roles simultaneously, acting as a page-level locator or a section-level contextual unit.
Since a single element can belong to multiple structural groups, pairwise graphs represent these groups indirectly through multiple edges; hyperedges provide a direct way to represent these group-level relations.

Specifically, we categorize and extract these structural hyperedges ($\mathcal{E}_D^{str}$) across two persistent structural levels:
\begin{equation}
    \mathcal{E}_D^{str} = \mathcal{E}_D^{page} \cup \mathcal{E}_D^{sec},
\label{eq:structural-edge-set}
\end{equation}
which represent structural relations at the page and section levels, respectively.

At the page level, we introduce page-containment hyperedges to preserve the physical boundaries of the source material. Let $v_k^p$ denote the page node for page $p_k$, and define $A_k^{page}=\{a_i\in A_D \mid \mathrm{pg}_i=k\}$ as the set of visual anchors located on page $p_k$, where $\mathrm{pg}_i$ is the page index inherited from the block representation in Eq.~\ref{eq:block-representation}:
\begin{equation}
    e_k^{page} = \{v_k^p\} \cup A_k^{page},
    \label{eq:page-edge}
\end{equation}
where $e_k^{page}\in\mathcal{E}_D^{page}$. This hyperedge exposes the page as a structural scope while preserving the fine-grained visual anchors within the same physical boundary.

Moving from physical layouts to semantic hierarchies, section-containment hyperedges model the logical organization of the document. Let $s_j$ denote a section title node, and let $A_j^{sec}$ denote the set of visual anchors falling within the textual scope governed by $s_j$. We define
\begin{equation}
    e_j^{sec}= \{s_j\} \cup A_j^{sec},
    \label{eq:section-edge}
\end{equation}
where $e_j^{sec}\in\mathcal{E}_D^{sec}$. This hyperedge binds the section heading with the multimodal evidence units governed by the same logical scope, allowing query-time selection to preserve section-level context.

Having extracted both contextual and structural relations, we integrate them to construct the unified document hypergraph:
\begin{equation}
    \mathcal{G}_D=(V_D,\mathcal{E}_D),
\label{eq:document-hypergraph}
\end{equation}
where the vertex set $V_D$ contains all parsed page and block elements, and the comprehensive hyperedge set is defined as the union of both relational modalities:
\begin{equation}
    \mathcal{E}_D = \mathcal{E}_D^{ctx} \cup \mathcal{E}_D^{str}.
\label{eq:unified-edge-set}
\end{equation}

Finally, this unified hypergraph paradigm organizes scattered document evidence into structurally interconnected representations rather than isolated pages or disjointed blocks. Consequently, query-time selection can efficiently route and retrieve evidence across varying scopes, from local multimodal contexts to global document structures.

\section{Online Query-Aware Hyperedge Selection}
\label{sec:selection}
\vspace{-0.8mm}
To resolve the \textbf{intent-agnostic evidence selection problem}, \modelname{} executes evidence selection as query processing over the pre-built, relation-aware document hypergraph $\mathcal{G}_D$.

\subsection{Overall Workflow}
\label{sec:selection_workflow}
\vspace{-0.8mm}
In this section, we describe how an input query is transformed into a query-specific evidence set for
downstream VLM answering. Formally, as illustrated in Figure~\ref{fig:framework}(ii), this process maps a query through
$q \rightarrow \pi(q) \rightarrow \mathcal{H}(q) \rightarrow H_q^* \rightarrow E_q,$
where $q$ is the user query, $\pi(q)$ is the parsed intent state, $\mathcal{H}(q)$ represents the candidate pool, $H_q^*$ denotes the selected evidence set, and $E_q$ signifies the final serialized VLM input. This workflow operates across three progressive stages:
\begin{table*}[!t]
\centering
\small
\setlength{\tabcolsep}{4pt}
\caption{Query intent types parsed in $\pi(q)$.}
\label{tab:intent-types}
\vspace{-1.5mm}
\begin{tabular}{@{}p{0.14\textwidth}p{0.34\textwidth}p{0.40\textwidth}@{}}
\toprule
\textbf{Intent} & \textbf{Description} & \textbf{Example} \\
\midrule
\textsc{locate} & Retrieve a localized evidence context anchored by an ID, page, section, or entity cue. & \emph{In the figure at the top of page 5, what is the final step?} \\
\textsc{count} & Count matching evidence instances across the document or a broad region. & \emph{How many tables include ``F1'' as a metric?} \\
\textsc{enumerate} & Collect and list matching items, attributes, or entities from the document. & \emph{List all titles of the charts in part 4.} \\
\textsc{compare} & Compare multiple evidence contexts to derive a difference, ranking, or relation. & \emph{Did more respondents say they are Product/Brand Focused or Customer Focused?} \\
\textsc{cross\_reasoning} & Compose information from multiple evidence contexts to infer an answer. & \emph{What is the sum of the 2007--2008 paid-search conversions and the number of green bars in the launch figure?} \\
\bottomrule
\end{tabular}
\vspace{-3mm}
\end{table*}
\paragraph{Step 1: Intent Analysis}
Given a user query $q$, \modelname{} first performs VLM-based intent analysis to predict both the intent type and the target evidence scope. This step aims to turn an open-form natural language question into a machine-actionable intent state $\pi(q)$ via the \textbf{Query Intent Analysis} procedure (Section~\ref{sec:intent_parsing}). The intent state summarizes the expected evidence scope and query cues, such as mentioned entities, page ranges, sections, or figure/table identifiers, which guide subsequent candidate proposal.

\paragraph{Step 2: Candidate Hyperedge Selection}

The parsed intent representation $\pi(q)$ is subsequently executed to activate \textbf{Intent-Guided Candidate Proposal} (Section~\ref{sec:candidate_proposal}) tools over the document hypergraph to construct an initial candidate pool $\mathcal{H}(q)$. The system then invokes \textbf{Cost-Aware Hyperedge Selection} to choose a compact subset of candidate hyperedges under the downstream input budget. The formal objective and greedy selection procedure are introduced in Section~\ref{sec:cost_aware_selection}.

\paragraph{Step 3: Evidence Integration}
Finally, the selected set $H_q^*$ is refined and serialized into the final VLM input $E_q$. This step applies query-aligned reranking and block-level pruning to produce a compact visual-textual context for downstream reasoning.

\subsection{Query Intent Analysis}
\label{sec:intent_parsing}
\vspace{-0.8mm}

To make heterogeneous query requirements executable during retrieval, we introduce query intent analysis. As summarized in Table~\ref{tab:intent-types}, real-world queries vary in both operation type and evidence scope, ranging from localized structural anchoring to document-wide aggregation and cross-page reasoning. Without explicitly parsing the query intent, evidence selection may either over-select irrelevant context for localized queries or under-select complementary evidence for broader queries.

Given the query $q$, we leverage a VLM to directly parse its requirements into three complementary operational signals. First, the intent type $t$ determines the high-level retrieval behavior, such as localization, counting, enumeration, comparison, or cross-evidence reasoning, as summarized in Table~\ref{tab:intent-types}. Second, the scope label $\ell\in\{\textsc{single},\textsc{multi},\textsc{document-wide}\}$ specifies the expected evidence range, indicating whether retrieval should focus on a single evidence group, compose multiple complementary groups, or aggregate evidence at the document level. Third, the query is decomposed into hard and soft constraints, which separate explicit structural requirements from semantic matching cues.

\begin{definition}[Query Constraints]
For a query $q$, let
\begin{equation}
C(q)=C_{\mathrm{hard}}(q)\cup C_{\mathrm{soft}}(q)
     =\{c_1,c_2,\ldots,c_m\},
\label{eq:query-constraints}
\end{equation}
denote its constraint set. Here, $C_{\mathrm{hard}}(q)$ contains explicit structural locators, such as page ranges, section mentions, and figure or table identifiers, while $C_{\mathrm{soft}}(q)$ contains semantic cues that guide keyword- or similarity-based matching.
\end{definition}

We then formalize these signals as the query-adaptive intent state
\begin{equation}
\pi(q)=\langle t,\ell,C_{\mathrm{hard}}(q),C_{\mathrm{soft}}(q)\rangle.
\label{eq:intent-state}
\end{equation}

The resulting $\pi(q)$ provides the routing specification for candidate proposal: it determines the activated proposal tools, while $C_{\mathrm{hard}}(q)$ restricts the admissible search space and $C_{\mathrm{soft}}(q)$ guides semantic recall expansion.

\subsection{Intent-Guided Candidate Proposal}
\label{sec:candidate_proposal}
\vspace{-0.8mm}

Given the parsed intent representation $\pi(q)$, this module activates a set of intent-guided proposal tools over the document hypergraph $\mathcal{G}_D$ and constructs a query-oriented candidate pool $\mathcal{H}(q)$ for later cost-aware selection.

Specifically, the router activates proposal tools according to the routing signals encoded in $\pi(q)$. Hard structural constraints, such as identifiers, page ranges, and section mentions, activate tools that retrieve structurally scoped evidence, while soft semantic cues, such as keywords, entities, or attributes, activate semantic matching tools to expand recall. In addition, \textsc{document-wide} queries with intents such as \textsc{count} and \textsc{enumerate} trigger document-scope tools that instantiate query-specific aggregation hyperedges over matched evidence units. Table~\ref{tab:tool_overview} summarizes these query-time proposal tools, their purposes, and the routing signals that activate them.

After routing, \modelname{} executes the activated tools over the document hypergraph $\mathcal{G}_D$ to produce the query-oriented candidate pool $\mathcal{H}(q)$.
Let $\mathcal{O}$ denote the full set of available proposal tools, and let $\Omega(\pi(q))\subseteq\mathcal{O}$ denote the subset activated by the parsed intent representation. The raw candidate set is first constructed as
\begin{equation}
\widetilde{\mathcal{H}}(q)
=
\bigcup_{\omega\in\Omega(\pi(q))}
\omega\bigl(
\mathcal{G}_D,
C_{\mathrm{hard}}(q),
C_{\mathrm{soft}}(q)
\bigr),
\label{eq:raw_candidate_pool}
\end{equation}
where $\omega$ denotes an atomic proposal tool. The constraint sets are not only used for routing, but also passed as execution inputs: $C_{\mathrm{hard}}(q)$ provides structural locators such as page ranges, section titles, or figure/table identifiers, while $C_{\mathrm{soft}}(q)$ provides semantic cues for keyword- or similarity-based matching.

\begin{table}[t]
\centering
\footnotesize
\setlength{\tabcolsep}{4pt}
\vspace{-1.5mm}
\caption{Query-time proposal tools and their routing signals.}\label{tab:tool_overview}
\vspace{-1.5mm}
\begin{tabular}{@{}p{0.27\columnwidth}p{0.29\columnwidth}p{0.34\columnwidth}@{}}
\toprule
\textbf{Tool} & \textbf{Purpose} & \textbf{Routing Signal} \\
\midrule
\texttt{find\_by\_id}
& Locate ID-anchored evidence.
& Explicit figure, table, or block ID in $C_{\mathrm{hard}}(q)$. \\
\texttt{find\_by\_page}
& Retrieve page-scoped evidence.
& Page index or resolved page range in $C_{\mathrm{hard}}(q)$. \\
\texttt{find\_by\_section}
& Retrieve section-scoped evidence.
& Section title or resolved section scope in $C_{\mathrm{hard}}(q)$. \\
\texttt{keyword\_search}
&  Construct keyword-matched candidate hyperedges.
& Keywords, entities, or attributes in $C_{\mathrm{soft}}(q)$. \\
\texttt{global\_count}
& Instantiate a query-specific counting hyperedge.
& Count intent $t$ with document-wide scope $\ell$. \\
\texttt{enumerate}
& Instantiate a query-specific enumeration hyperedge.
& Enumeration intent $t$ or multi-item scope $\ell$. \\
\texttt{visual\_search}
& Supply cross-modal page-level evidence candidates.
& Weak textual cues or visual-prior fallback. \\
\bottomrule
\end{tabular}
\vspace{-3mm}
\end{table}

The outputs of all selected tools are then deduplicated by hyperedge identity or page identity, yielding the candidate pool
\begin{equation}
\mathcal{H}(q)
=
\operatorname{Dedup}\bigl(\widetilde{\mathcal{H}}(q)\bigr).
\label{eq:candidate_pool}
\end{equation}

The resulting $\mathcal{H}(q)$ is therefore an over-complete candidate pool containing both pre-built hyperedges retrieved from the offline document hypergraph and query-induced hyperedges instantiated during online proposal, such as keyword-matched groups or aggregation hyperedges for counting and enumeration. This design allows \modelname{} to go beyond pre-built document relations and construct evidence groups specified by the query semantics.

\subsection{Cost-Aware Hyperedge Selection}
\label{sec:cost_aware_selection}
\vspace{-0.8mm}
Given the candidate pool $\mathcal{H}(q)$, the selector aims to retain a compact subset of hyperedges that sufficiently covers the query intent under the VLM input budget. To address this budget-aware evidence construction problem introduced in Section~\ref{sec:intro}, we formulate query-aware hyperedge selection as a coverage maximization problem under a VLM-budget-derived cardinality cap and show that it is NP-hard. Based on this formulation, we introduce a cost-aware greedy algorithm that ranks candidates by marginal coverage gain per unit input cost and stops adaptively when additional evidence is no longer cost-effective. We next define the coverage objective and input-cost signal before presenting the greedy selection algorithm.

\begin{definition}[Intent-Conditioned Coverage]
Let $\mathcal{H}(q)=\{h_1,\ldots,h_n\}$ be the candidate hyperedge pool for query $q$. For any selected subset
$S\subseteq\mathcal{H}(q)$ and query intent $t$, we define its coverage as
\begin{equation}
\mathrm{Coverage}_t(S,q)
=
\bigl(
\mathrm{Cov}_{\mathrm{hard}}(S,q),
\mathrm{Cov}^{t}_{\mathrm{soft}}(S,q)
\bigr),
\label{eq:intent-coverage}
\end{equation}
where the two components are compared lexicographically. The hard component evaluates whether the selected hyperedges satisfy the explicit structural constraints in $C_{\mathrm{hard}}(q)$, such as identifiers, pages, sections, or document-wide aggregation requirements. The intent-conditioned soft component evaluates how well $S$ supports the semantic requirements of $q$ under intent $t$.
\end{definition}

\begin{definition}[Input Cost]
For a selected hyperedge set $S$, we define $\mathrm{Cost}(S)$ as the estimated
input cost after rendering $S$ into the VLM input. Let $\mathcal{M}(S)$ denote the
deduplicated set of visual-textual evidence units included by $S$, including
rendered page regions, visual crops, textual blocks, and structural metadata.
Then,
\[
\mathrm{Cost}(S)
=
\sum_{u\in \mathcal{M}(S)} \kappa(u),
\]
where $\kappa(u)>0$ is the estimated input cost of evidence unit $u$, measured
by the textual tokens and visual tokens it contributes after preprocessing.
\end{definition}

For a candidate hyperedge $h$, its marginal input cost with respect to the current selected set $S$ is defined as
\begin{equation}
\Delta\mathrm{Cost}(h\mid S)
=
\mathrm{Cost}(S\cup\{h\})-\mathrm{Cost}(S).
\label{eq:marginal-cost}
\end{equation}
With the coverage objective defined, we formulate query-aware hyperedge selection as a lexicographic coverage maximization problem.

\noindent\textbf{Problem Definition.}
Given the document hypergraph $\mathcal{G}_D$, a query $q$, and a candidate pool $\mathcal{H}(q)$, the problem is
to select an evidence set $H_q^*\subseteq\mathcal{H}(q)$ that satisfies explicit query constraints and maximizes intent-conditioned semantic coverage. Formally, we define
\begin{equation}
\begin{aligned}
H_q^*
=
\arg\max_{\substack{
S\subseteq\mathcal{H}(q)\\
|S|\le K}}
\Bigl(
\mathrm{Cov}_{\mathrm{hard}}(S,q),
\mathrm{Cov}^{t}_{\mathrm{soft}}(S,q)
\Bigr),
\end{aligned}
\label{eq:selection-objective}
\end{equation}
where the two terms are compared lexicographically. The selection process first prioritizes explicit constraint satisfaction and then maximizes intent-conditioned semantic support. Here, $K$ is not a manually specified retrieval cutoff, but a budget-derived cardinality upper bound computed as $K=\lfloor B_{\mathrm{vlm}}/c_{\max}\rfloor$, where $B_{\mathrm{vlm}}$ is the model-specific input budget and $c_{\max}$ is an offline upper bound on the rendered cost of one hyperedge. This design keeps the formal search space compatible with the model budget, avoiding contexts that exceed the maximum context window or cause inference-time errors. Note that K only serves as a feasibility cap rather than a target selection size. The actual number of selected hyperedges is determined adaptively by the cost-aware stopping rule used in the greedy selection procedure.

\begin{theorem}
The query-aware hyperedge selection problem is NP-hard.
\end{theorem}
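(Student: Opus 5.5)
We prove NP-hardness by a polynomial-time reduction from Maximum Coverage, which is NP-hard in its decision form: given a universe $U=\{x_1,\ldots,x_N\}$, a family $\mathcal{F}=\{F_1,\ldots,F_n\}\subseteq 2^U$, an integer $k$ and a threshold $r$, decide whether some $k$ sets from $\mathcal{F}$ cover at least $r$ elements of $U$. Since the objective in Eq.~\ref{eq:selection-objective} is only partially specified, it suffices to show that a \emph{special case} of the query-aware hyperedge selection problem already encodes this decision problem.

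\textbf{Construction.} Given a Maximum Coverage instance, build a document hypergraph $\mathcal{G}_D$ as follows.
\begin{itemize}
\item Create one text block node for each element $x_i\in U$.
\item For each set $F_j$, add a visual anchor $a_j$, and let the candidate hyperedge be $h_j=\{a_j\}\cup\{x_i : x_i\in F_j\}$. This is a valid contextual hyperedge in the sense of Eq.~\ref{eq:contextual-hyperedge}, with the elements of $F_j$ playing the role of the related blocks $R_j$.
\item Take an empty hard-constraint set $C_{\mathrm{hard}}(q)=\emptyset$, so that $\mathrm{Cov}_{\mathrm{hard}}(S,q)$ is constant for every $S$. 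The lexicographic objective then collapses to its soft component alone.
\item Let the soft constraints $C_{\mathrm{soft}}(q)$ consist of one semantic cue per element $x_i$. Define $\mathrm{Cov}^{t}_{\mathrm{soft}}(S,q)=\bigl|\bigcup_{h_j\in S}(h_j\cap U)\bigr|$, the number of cues satisfied by at least one selected hyperedge. This is a natural instantiation of ``how well $S$ supports the semantic requirements of $q$''.
\item Assign each hyperedge the same rendered cost $c_{\max}$, and set $B_{\mathrm{vlm}}=k\,c_{\max}$. Then the budget-derived cap is $K=\lfloor B_{\mathrm{vlm}}/c_{\max}\rfloor=k$.
\end{itemize}
All of this takes time polynomial in $|U|+\sum_j|F_j|$.

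\textbf{Correctness.} There is a one-to-one correspondence $S\leftrightarrow\{F_j: h_j\in S\}$ between feasible subsets $S$ with $|S|\le K$ and subfamilies of at most $k$ sets. Under this correspondence the objective value equals the number of covered elements. Hence an optimal $H_q^*$ achieves value at least $r$ if and only if the Maximum Coverage instance is a yes-instance. A polynomial-time algorithm for the selection problem would therefore decide Maximum Coverage, so the selection problem is NP-hard.

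\textbf{Main obstacle.} The delicate step is justifying that the chosen instantiation is admissible, since the paper defines $\mathrm{Cov}^{t}_{\mathrm{soft}}$ and the hyperedge costs only informally. Two points need to be argued explicitly:
\begin{itemize}
\item The general problem contains this special case: arbitrary set systems are realizable as contextual hyperedges, and cue-counting coverage is a legitimate soft coverage function. Hardness of a restriction implies hardness of the general problem.
\item The budget mechanism yields exactly the cardinality $k$ through $K=\lfloor B_{\mathrm{vlm}}/c_{\max}\rfloor$. Uniform per-hyperedge costs ensure this.
\end{itemize}
If the element nodes shared among hyperedges cause $\mathrm{Cost}(S)$ to be deduplicated, this does not matter. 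The formal constraint in Eq.~\ref{eq:selection-objective} is only $|S|\le K$, so deduplication does not affect feasibility.
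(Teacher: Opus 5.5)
Your proposal is correct and uses essentially the same reduction as the paper: set $C_{\mathrm{hard}}(q)=\emptyset$, create one soft cue per universe element and one candidate hyperedge per set with a binary match, so that soft coverage counts covered elements under the cardinality cap $K$. Your extra steps (the decision-version threshold, realizing the sets as contextual hyperedges with distinct anchors, and choosing $B_{\mathrm{vlm}}=k\,c_{\max}$ so that $K=k$) are harmless and make explicit what the paper's proof takes for granted by treating $K$ directly as the coverage budget.
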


\begin{proof}
We prove NP-hardness by reduction from Maximum Coverage. Consider a
restricted instance with no hard constraints,
$C_{\mathrm{hard}}(q)=\emptyset$, and only soft constraints. Given a
Maximum Coverage instance with universe
$U=\{u_1,\ldots,u_m\}$, subset collection
$\mathcal{A}=\{A_1,\ldots,A_n\}$, and budget $K$, construct a query $q$
with
\[
C_{\mathrm{soft}}(q)=\{c_1,\ldots,c_m\},
\qquad
C_{\mathrm{hard}}(q)=\emptyset,
\]
where each $c_j$ corresponds to $u_j$.

For each subset $A_i\in\mathcal{A}$, create a candidate hyperedge
$h_i\in\mathcal{H}(q)$. Define a binary soft matching function
$f(c_j,h_i)\in\{0,1\}$ such that
\[
f(c_j,h_i)=1
\quad\text{if and only if}\quad
u_j\in A_i.
\]
In this restricted instance, $\mathrm{Cov}_{\mathrm{hard}}(S,q)$ is constant for all feasible $S$. The soft coverage objective then becomes
\[
\mathrm{Cov}^{t}_{\mathrm{soft}}(S,q)
=
\sum_{c_j\in C_{\mathrm{soft}}(q)}
\max_{h_i\in S} f(c_j,h_i),
\]
which is exactly the number of universe elements covered by the selected
subsets. Therefore, solving this restricted instance is equivalent to
selecting at most $K$ sets from $\mathcal{A}$ to maximize coverage over
$U$. Since Maximum Coverage is NP-hard, the general query-aware hyperedge
selection problem is also NP-hard.
\end{proof}

Since the query-aware selection problem is NP-hard, we use a greedy density-based selection strategy, as described in Algorithm~\ref{alg:greedy_hyperedge}.
The greedy strategy approximates the lexicographic structure of Eq.~\ref{eq:selection-objective}. It first prioritizes structurally admissible evidence and then performs cost-aware semantic augmentation.
First, the selector ensures hard coverage by retaining candidates that directly satisfy the explicit structural constraints in $C_{\mathrm{hard}}(q)$, such as identifier, page, or section constraints.
This anchors the selected evidence to the admissible structural scope specified by the query.
Then, the selector evaluates candidates retrieved by semantic cues in $C_{\mathrm{soft}}(q)$ to improve soft coverage under the input budget.
Let $c_i\in C_{\mathrm{soft}}(q)$ denote a soft constraint and let $f(c_i,h)\in[0,1]$ denote its matching score with candidate hyperedge $h$.
The marginal soft gain $\Delta_{\mathrm{soft}}(h\mid S,q)$ measures how much additional soft-constraint coverage $h$ contributes under the parsed evidence scope $\ell$.
Formally, for \textsc{single} queries, we use the direct soft coverage of $h$:
\begin{equation}
\Delta_{\mathrm{soft}}(h\mid S,q)
=
\frac{1}{|C_{\mathrm{soft}}(q)|}
\sum_{c_i\in C_{\mathrm{soft}}(q)} f(c_i,h).
\label{eq:single-soft-gain}
\end{equation}
For \textsc{multi} queries, we only count the improvement beyond the current selected set $S$:
\begin{equation}
\Delta_{\mathrm{soft}}(h\mid S,q)
=
\frac{1}{|C_{\mathrm{soft}}(q)|}
\sum_{c_i\in C_{\mathrm{soft}}(q)}
\max\bigl(0, f(c_i,h)-\beta_i(S)\bigr),
\label{eq:multi-soft-gain}
\end{equation}
where $\beta_i(S)=\max_{h'\in S}f(c_i,h')$ denotes the best current support for soft constraint $c_i$. Thus, \textsc{single} queries favor strong direct matches, while \textsc{multi} queries favor complementary candidates that add new or stronger support for the soft cues.

To account for input cost, candidates are ranked by their marginal utility density. Since every candidate hyperedge contributes at least one rendered evidence
unit with positive cost, $\Delta\mathrm{Cost}(h\mid S)>0$.
The density is defined as
\begin{equation}
\rho(h\mid S,q)
=
\frac{
\Delta_{\mathrm{soft}}(h\mid S,q)
}{
\Delta\mathrm{Cost}(h\mid S)
}.
\label{eq:greedy-density}
\end{equation}
This density favors cost-effective candidates and penalizes \textbf{noisy evidence} that increases the input cost without sufficient marginal query coverage.
The selector continues soft augmentation while
\begin{equation}
\rho(h^\ast\mid S,q)\ge \alpha_{\rho}\rho_1,
\label{eq:density-threshold}
\end{equation}
where $\rho_1$ is the density of the first selected soft hyperedge. When the best remaining candidate falls below this threshold, the selector stops because no remaining evidence is cost-effective enough relative to the strongest evidence for the query. Here, $K$ serves as a model-budget-derived feasibility cap rather than a fixed retrieval cutoff. In practice, the hard-coverage initialization is typically far smaller than $K$. The effective selection size is determined adaptively by the density-based stopping rule in Eq.~\ref{eq:density-threshold}, and the selector terminates before reaching $K$ in most cases.

\begin{algorithm}[!htbp]
\small
\caption{Cost-Aware Hyperedge Selection}
\label{alg:greedy_hyperedge}
\KwIn{Candidate pool $\mathcal{H}(q)$, query $q$, parsed intent $\pi(q)$, VLM-budget-derived cap $K$}
\KwOut{Selected evidence set $H_q^*$}

Extract evidence scope $\ell$, hard constraints $C_{\mathrm{hard}}(q)$, and soft constraints $C_{\mathrm{soft}}(q)$ from $\pi(q)$\;

\tcp{Hard-coverage initialization}
$S \leftarrow \{h\in\mathcal{H}(q)\mid \operatorname{SatisfyHard}(h,C_{\mathrm{hard}}(q))\}$\;

\tcp{Density-based soft augmentation}
$P \leftarrow \mathcal{H}(q)\setminus S$\;
$\rho_1 \leftarrow \bot$\;

\While{$P\neq\emptyset$ \textbf{and} $|S|<K$}{
    \ForEach{$h\in P$}{
        compute $\Delta_{\mathrm{soft}}(h\mid S,q)$ according to $\ell$\;
        compute $\rho(h\mid S,q)$ using Eq.~\ref{eq:greedy-density}\;
    }
    choose $h^*$ by the highest $\rho(h\mid S,q)$\;
    \If{$\rho_1\neq\bot$ \textbf{and} $\rho(h^\ast\mid S,q)<\alpha_{\rho}\rho_1$}{
        \textbf{break}\;
    }
    \If{$\rho_1=\bot$}{
        $\rho_1\leftarrow \rho(h^*\mid S,q)$\;
    }
    $S\leftarrow S\cup\{h^*\}$\;
    $P\leftarrow P\setminus\{h^*\}$\;
}

\Return{$S$}\;
\end{algorithm}

After hyperedge selection, the selected set $H_q^*$ is further refined by two lightweight evidence-integration modules. First, a \emph{constraint-aligned reranking
module} reorders the selected hyperedges according to their matched constraints
and evidence scope, so that evidence supporting the same constraint, anchor
entity, or sub-question is presented in a more coherent sequence. This is
particularly useful for \textsc{multi} queries, where multiple selected
hyperedges may correspond to different entities or constraint groups and should
be organized before being rendered to the VLM.

Second, a \emph{block-level pruning module} further compresses the selected evidence after hyperedge selection.
Inside this module, we perform a second-stage greedy refinement at the block level. Given a selected soft hyperedge, we treat its anchor and caption blocks as fixed structural seeds, and greedily retain only auxiliary related blocks whose marginal coverage gain justifies their additional context cost. In this way, pruning compresses noisy local context inside each hyperedge without breaking the structural faithfulness of the selected evidence.

The final evidence context rendered to the VLM is then obtained from the union
of the refined selected hyperedges:
\begin{equation}
E_q = \bigcup_{h \in \widetilde{H}_q^*} h,
\label{eq:final-evidence}
\end{equation}
where $\widetilde{H}_q^*$ denotes the selected set after reranking and pruning.
This query-dependent evidence context can retain broader page- or section-level
evidence for document-wide queries, while collapsing to a small number of
visual-centered contextual hyperedges for localized queries. As a result, the
VLM receives evidence that is more coherent than disconnected OCR blocks and
more compact than full-page retrieval.

\section{Experiments}
\label{sec:experiments}
\vspace{-0.8mm}
In this section, we evaluate the overall performance of \modelname{} on long-context multimodal document QA tasks compared to the state-of-the-art baselines. We also conduct an efficiency analysis, an ablation study, and a parameter sensitivity analysis, and present a case study.
\subsection{Experiment Setup}
\vspace{-0.8mm}
\subsubsection{Datasets}
\label{sec:experimental_datasets}
To assess the performance of \modelname{} on long-form document QA, we evaluate it on three representative benchmarks: MMLongBench-Doc (MMLB-Doc)~\cite{maMMLongBenchDocBenchmarkingLongcontext2024b}, LongDocURL (LDU)~\cite{dengLongDocURLComprehensiveMultimodal2025}, and DocBench~\cite{zouDOCBENCHBenchmarkEvaluatingb}. They cover diverse domains, including administrative files, tutorials, and research reports, and feature multimodal content such as charts, tables, and text. Dataset statistics are summarized in Table~\ref{tab:longdoc_benchmarks_transposed}. All benchmark documents are digital PDFs, while scanned PDFs can be processed by the same rendering, parsing, and hypergraph construction pipeline without modification. With average lengths of 47.5--85.6 pages, these benchmarks require evidence construction beyond direct VLM inference. Datasets such as DocVXQA~\cite{souibguiDocVXQAContextAwareVisualb} and DocVQA~\cite{mathew2021docvqadatasetvqadocument} are therefore excluded from the main comparison, as they primarily focus on short documents or single-page QA and are less aligned with the long-context setting targeted in this work.
\begin{table}[h]
\centering
\small
\setlength{\tabcolsep}{4pt}
\vspace{-2mm}
\caption{Dataset statistics of the evaluated document QA benchmarks.}
\label{tab:longdoc_benchmarks_transposed}
\vspace{-1.5mm}
\begin{tabular}{lccc}
\toprule
Metric & MMLongBench-Doc & LongDocURL & DocBench \\
\midrule
\# Questions & 1091 & 2325 & 1102 \\
\# Documents & 135 & 396 & 229 \\
Avg. Pages & 47.5 & 85.6 & 66.0 \\
Avg. Tokens & 21,214.1 & 56,715.1 & 46,377.0 \\
\bottomrule
\end{tabular}
\vspace{-2mm}
\end{table}
\subsubsection{Metrics}
\label{sec:experimental_metrics}
We follow the official evaluation metric of each benchmark~\cite{maMMLongBenchDocBenchmarkingLongcontext2024b,dengLongDocURLComprehensiveMultimodal2025,zouDOCBENCHBenchmarkEvaluatingb}. Specifically, we report generalized accuracy (ACC) and generalized F1 for MMLongBench-Doc and LongDocURL, and report DocBench accuracy under its LLM-based semantic judging protocol using GPT-4o~\cite{openai2024gpt4ocard}.

\subsubsection{Baselines}
\label{sec:experimental_baselines}
To comprehensively evaluate our framework, we compare it against representative baselines that construct evidence for long-document QA through textual or multimodal retrieval.
For \emph{(1) textual input methods}, we include methods that rely on OCR-parsed or structurally extracted text as the primary input without directly using rendered visual evidence, represented by DocETL~\cite{shankarDocETLAgenticQuery2025}.
For \emph{(2) multimodal RAG methods}, we include methods that retrieve visual or multimodal evidence from documents, including the page-retrieval baseline M3DocRAG~\cite{choM3DocRAGMultimodalRetrieval2024b} with different retrieved-page budgets, the multi-agent baseline MDocAgent~\cite{hanMDocAgentMultiModalMultiAgent2025b}, the graph-retrieval baseline MoLoRAG~\cite{wuMoLoRAGBootstrappingDocument2025}, and MHier-RAG~\cite{gongMHierRAGMultiModalRAG2025}.

\subsubsection{Implementation Details}
\label{sec:implementation_details}

To ensure a fair and rigorous evaluation, we implement \modelname{} and all comparative baselines within a standardized and reproducible pipeline. For VLM inference, we run Qwen-family VLMs~\cite{DBLP:journals/corr/abs-2505-09388,DBLP:journals/corr/abs-2502-13923} on NVIDIA H100 GPUs. For data preprocessing, we use PaddleOCR~\cite{cui2025paddleocr30technicalreport} to extract layout-aware document blocks and ColPali v1.3~\cite{DBLP:journals/corr/abs-2407-01449} for page-level retrieval.
Unless specified otherwise, the density stopping coefficient $\alpha_{\rho}$ is set to 1.0 (see Section~\ref{sec:parameter_sensitivity} for a detailed analysis). The VLM input budget $B_{\mathrm{vlm}}$ is determined by the maximum context window encoded in the model configuration, e.g., 128K tokens for Qwen2.5-VL-7B and 256K tokens for Qwen3-VL-8B/32B, and is shared by textual and visual tokens after preprocessing. The complete source code, prompts, and configurations are available at \url{https://github.com/WenxinAZhao/HyperDoc}.

\subsection{Overall Performance}
\label{sec:overall_performance}
\vspace{-0.8mm}
In this subsection, we present a comprehensive evaluation of \modelname{}. We evaluate its overall question-answering accuracy, retrieval effectiveness, and inference efficiency against state-of-the-art baselines, as well as its intent-type classification accuracy on manually labeled queries.
\begin{table*}[!t]
\centering
\setlength{\tabcolsep}{16pt}
\caption{Overall comparison on different backbone models. Bold and underlined values denote the best and second-best results within each backbone, respectively; blue cells highlight \modelname{}.}
\label{tab:overall_results}
\vspace{-2mm}
\small
\resizebox{\textwidth}{!}{\renewcommand{\arraystretch}{0.88}%
\begin{tabular}{l l | cc | cc | c}
\toprule
\multirow{2}{*}{Backbone Model} &
\multirow{2}{*}{Method} &
\multicolumn{2}{c|}{MMLongBench-Doc} &
\multicolumn{2}{c|}{LongDocURL} &
\multicolumn{1}{c}{DocBench} \\
\cmidrule(lr){3-4}
\cmidrule(lr){5-6}
\cmidrule(lr){7-7}
& & ACC & F1 & ACC & F1 & ACC \\
\midrule
\multirow{7}{*}{Qwen2.5-VL 7B}
& DocETL
& 22.8 & 11.8
& 23.2 & 30.7
& 28.5 \\
& M3DocRAG (top-5 cutoff)
& 22.0 & \underline{24.5}
& \underline{43.9} & \underline{46.3}
& 36.3 \\
& M3DocRAG (top-10 cutoff)
& 21.9 & 24.4
& \underline{43.9} & 46.2
& 35.8 \\
& MDocAgent
& 23.9 & 21.1
& 43.6 & 44.9
& \underline{62.2} \\
& MoLoRAG
& \underline{35.4} & 18.8
& 38.9 & 39.4
& 55.1 \\
& MHier-RAG
& 26.4 & 11.1
& 20.9 & 22.7
& 34.4 \\
& \cellcolor{LLightBlue}\modelname
& \cellcolor{LLightBlue}\textbf{37.1} & \cellcolor{LLightBlue}\textbf{24.6}
& \cellcolor{LLightBlue}\textbf{49.8} & \cellcolor{LLightBlue}\textbf{53.3}
& \cellcolor{LLightBlue}\textbf{65.5} \\
\midrule
\multirow{7}{*}{Qwen3-VL 8B}
& DocETL
& 27.6 & 15.7
& 28.1 & 35.0
& 37.9 \\
& M3DocRAG (top-5 cutoff)
& 28.8 & 32.1
& 57.1 & 58.4
& 62.3 \\
& M3DocRAG (top-10 cutoff)
& 29.9 & 33.2
& \underline{58.4} & 59.6
& 63.3 \\
& MDocAgent
& \underline{32.8} & \underline{35.6}
& \underline{58.4} & \underline{61.3}
& \underline{73.7} \\
& MoLoRAG
& 31.6 & 20.6
& 45.1 & 45.2
& 68.0 \\
& MHier-RAG
& 30.6 & 15.6
& 27.2 & 29.2
& 38.7 \\
& \cellcolor{LLightBlue}\modelname
& \cellcolor{LLightBlue}\textbf{53.2} & \cellcolor{LLightBlue}\textbf{42.3}
& \cellcolor{LLightBlue}\textbf{61.7} & \cellcolor{LLightBlue}\textbf{63.0}
& \cellcolor{LLightBlue}\textbf{75.8} \\
\midrule
\multirow{7}{*}{Qwen3-VL 32B}
& DocETL
& 27.1 & 17.2
& 28.4 & 37.2
& 41.4 \\
& M3DocRAG (top-5 cutoff)
& 42.0 & 36.4
& 61.6 & 63.4
& 65.8 \\
& M3DocRAG (top-10 cutoff)
& 33.5 & 28.8
& 53.8 & 55.1
& 43.3 \\
& MDocAgent
& \underline{49.1} & \underline{41.4}
& \underline{65.7} & \underline{66.4}
& \underline{75.8} \\
& MoLoRAG
& 44.4 & 37.7
& 61.3 & 64.3
& 72.8 \\
& MHier-RAG
& 29.9 & 15.7
& 29.8 & 32.4
& 39.8 \\
& \cellcolor{LLightBlue}\modelname
& \cellcolor{LLightBlue}\textbf{54.7} & \cellcolor{LLightBlue}\textbf{44.6}
& \cellcolor{LLightBlue}\textbf{66.2} & \cellcolor{LLightBlue}\textbf{66.9}
& \cellcolor{LLightBlue}\textbf{78.2} \\
\bottomrule
\end{tabular}
}
\vspace{-2mm}
\end{table*}

\begin{figure*}[!t]
\centering
\includegraphics[width=0.95\linewidth]{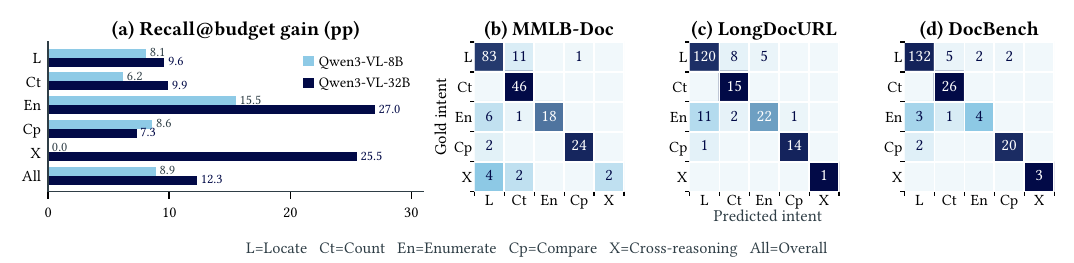}
\vspace{-4.5mm}
\caption{(a) Retrieval recall@budget gains over ColPali v1.3 by query intent on LDU under Qwen3-VL 8B and 32B. (b--d) Intent-type confusion matrices on MMLB-Doc, LDU, and DocBench under Qwen3-VL 8B.}
\label{fig:retrieval_by_intent}
\vspace{-3.5mm}
\end{figure*}

\subsubsection{Overall Accuracy}
Table~\ref{tab:overall_results} summarizes the overall accuracy comparison. \modelname{} consistently outperforms all baselines across different backbone settings, demonstrating the effectiveness of our hypergraph-based evidence construction framework.
First, the results highlight the importance of relation-preserving evidence construction. Among baselines, MoLoRAG and MDocAgent generally perform better than page-retrieval baselines such as M3DocRAG, indicating that organizing evidence through explicit relations among document content is more effective than similarity-based retrieval alone. By modeling multimodal and structural relations as document hyperedges, \modelname{} achieves substantial gains over the other baselines across the evaluated benchmarks.

Second, the consistent gains of \modelname{} further demonstrate the effectiveness of adaptive intent-aware evidence selection. \modelname{} selects evidence according to the query's required scope, while M3DocRAG relies on fixed top-$K$ page retrieval.
For M3DocRAG, moving from a top-5 cutoff to a top-10 cutoff yields only limited improvement in most settings, showing that useful evidence cannot be obtained by simply increasing a fixed retrieval budget. In the 32B setting, the top-10 variant can even perform worse, partly because the larger backbone requires tighter input compression under the same serving constraints.
This suggests that increasing retrieved input without considering the query's actual demand can add computational overhead while introducing redundant or noisy context, especially for queries that are already answerable with a smaller evidence set.
\subsubsection{Retrieval Effectiveness}
We further analyze retrieval performance under different query intents, since they impose different evidence access patterns. Figure~\ref{fig:retrieval_by_intent}(a) reports recall@budget gains by intent. For a fair comparison, we evaluate \modelname{} and ColPali v1.3~\cite{DBLP:journals/corr/abs-2407-01449} under a common budget on LDU, the only benchmark with complete page-level evidence annotations. Overall, \modelname{} improves recall@budget across most intents, with clear gains for document-scope \textsc{count} and \textsc{enumerate} queries that require aggregating evidence scattered across pages. Aggregation hyperedges help recover more complete evidence than similarity-ranked page retrieval, while fine-grained selection avoids irrelevant content even on correctly retrieved pages.
\begin{figure}[!t]
\centering
\includegraphics[width=\linewidth]{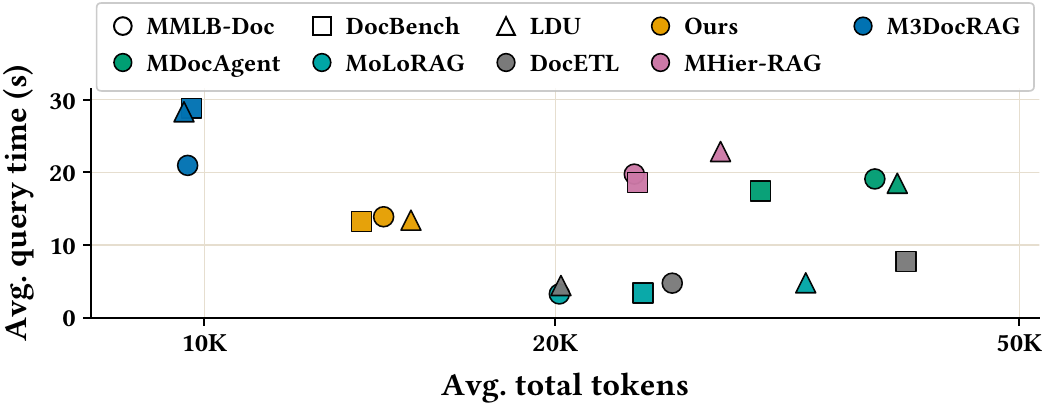}
\caption{Efficiency across baselines (Qwen3-VL 8B; lower-left is better).}
\vspace{-4mm}
\label{fig:efficiency}
\end{figure}
\subsubsection{Intent-Type Classification}
To evaluate the intent-analysis module's type predictions, we manually review 200 queries per benchmark. Figure~\ref{fig:retrieval_by_intent}(b--d) reports five-way confusion matrices for Qwen3-VL 8B, with accuracies of 86.5\%, 86.0\%, and 92.5\% on MMLB-Doc, LDU, and DocBench, respectively. To assess downstream effects, we force \textsc{count} to \textsc{locate} with other intent fields fixed, reducing QA accuracy by 7.8 and 3.1 pp on MMLB-Doc and LDU, respectively.
\begin{table}[!t]
\centering
\caption{Offline construction cost and persistent storage.}
\label{tab:offline_cost}
{\footnotesize
\setlength{\tabcolsep}{4pt}
\renewcommand{\arraystretch}{0.96}
\begin{tabular*}{0.95\columnwidth}{@{\extracolsep{\fill}}lccc@{}}
\toprule
Item & MMLB-Doc & LDU & DocBench \\
\midrule
Render + OCR (s/doc) & 120.0 & 239.6 & 184.9 \\
Hypergraph (s/doc) & 4.9 & 5.1 & 5.6 \\
Hypergraph index (MB) & 22.3 & 58.6 & 23.3 \\
OCR artifacts (MB) & 24.1 & 117.7 & 97.1 \\
Page images (GB) & 3.08 & 16.12 & 4.89 \\
\bottomrule
\end{tabular*}}
\vspace{-3.5mm}
\end{table}

\subsubsection{Overall Efficiency}
To assess end-to-end inference efficiency, we report token usage and query latency under each method's own evidence selection policy in Figure~\ref{fig:efficiency}. As shown in the figure, \modelname{} consistently stays in a compact region across the three benchmarks. Specifically, compared with MDocAgent, \modelname{} uses fewer tokens and responds faster, since most cross-modal relations are established before inference, avoiding their repeated discovery through multi-step agentic reasoning. Compared with M3DocRAG, \modelname{} substantially reduces latency, although fine-grained textual evidence can increase token usage. MoLoRAG and DocETL can be faster in some settings, but generally consume more tokens, increasing overall inference cost. Thus, \modelname{} achieves a favorable balance between token cost and query latency, making it efficient for serving scenarios where both response time and LLM token cost matter.

To complement the online analysis, Table~\ref{tab:offline_cost} reports offline construction cost and persistent storage. Rendering and OCR are preprocessing steps shared with most baselines, whereas HyperDoc-specific hypergraph construction takes 4.9--5.6 s/document and its index occupies 22.3--58.6 MB per benchmark. The overall offline pipeline scales approximately linearly with document and page counts under fixed resources.

\subsection{Ablation}
\label{sec:ablation}
\vspace{-0.8mm}
To understand the role of each component in \modelname{}, we evaluate ablated variants from both answer-quality and efficiency perspectives in Table~\ref{tab:ablation}.

First, the construction-level variants examine the effect of relation-preserving hyperedges by preserving access to primitive parsed elements while removing their relational organization. \emph{w/o Contextual HE} removes contextual hyperedges; the same parsed elements remain accessible, but the selector can no longer bind visual anchors with their associated textual context as relation-preserving evidence groups. \emph{w/o Structural HE} keeps page/section constraints for retrieval but removes explicit hierarchical relations when constructing the final evidence.
Although these variants use fewer tokens and lower latency because complete relation-level contexts are no longer rendered, both lead to clear accuracy drops across the three datasets. The larger drop of \emph{w/o Contextual HE} highlights the importance of preserving cross-modal relations, while the degradation of \emph{w/o Structural HE} shows the value of maintaining hierarchical containment during evidence construction.

Second, the selection-level variants show how online evidence selection affects the final context. Removing hyperedge selection causes the clearest efficiency degradation and also reduces answer quality on MMLB-Doc and LDU, indicating that this module is the main mechanism for filtering redundant evidence under the input budget. Reranking and pruning provide lighter refinements, where reranking improves evidence order for downstream reasoning, while pruning reduces noisy auxiliary context and token usage without breaking the structural seeds of each selected hyperedge.
Overall, the full model provides a favorable effectiveness-efficiency trade-off across the three benchmarks.

\begin{table}[t]
\centering
\footnotesize
\setlength{\tabcolsep}{3pt}
\renewcommand{\arraystretch}{0.80}
\caption{Qwen3-VL 8B ablation (HE: hyperedges; Time/Total Tok.: per-query averages).}
\label{tab:ablation}
\vspace{-1.5mm}
\begin{tabular}{@{}llcccc@{}}
\toprule
  Dataset & Variant & ACC & F1 & Time & Total Tok. \\
  \midrule

  \multirow{6}{*}{MMLB-Doc}
  & Full & \textbf{53.2} & \textbf{42.3} & 13.9s & 14.3K \\
  & w/o Contextual HE & 43.0 & 28.9 & 7.5s & 6.0K \\
  & w/o Structural HE & 48.0 & 35.0 & 10.6s & 11.1K \\
  & w/o HE Selection & 50.3 & 39.5 & 26.4s & 36.1K \\
  & w/o Reranking & 50.3 & 39.3 & 14.5s & 16.5K \\
  & w/o Pruning & 50.7 & 40.6 & 14.9s & 16.7K \\

  \midrule

  \multirow{6}{*}{LDU}
  & Full & \textbf{61.7} & \textbf{63.0} & 13.4s & 15.0K \\
  & w/o Contextual HE & 47.2 & 54.0 & 8.5s & 7.9K \\
  & w/o Structural HE & 53.8 & 58.6 & 10.3s & 12.5K \\
  & w/o HE Selection & 59.7 & 61.1 & 29.7s & 36.5K \\
  & w/o Reranking & 61.6 & 62.8 & 13.1s & 15.0K \\
  & w/o Pruning & 61.1 & 62.2 & 13.0s & 15.2K \\

  \midrule

  \multirow{6}{*}{DocBench}
  & Full &  \textbf{75.8} & -- & 13.2s & 13.6K \\
  & w/o Contextual HE & 46.4 & -- & 9.5s & 4.8K \\
  & w/o Structural HE & 59.5 & -- & 10.4s & 8.6K \\
  & w/o HE Selection & 74.5 & -- & 20.7s & 25.1K \\
  & w/o Reranking & 75.6 & -- & 12.9s & 14.1K \\
  & w/o Pruning & 75.5 & -- & 12.5s & 14.4K \\

\bottomrule
\end{tabular}
\vspace{-4mm}
\end{table}

\subsection{Parameter Sensitivity}
\label{sec:parameter_sensitivity}
\vspace{-0.8mm}

In this subsection, we study the robustness of \modelname{} to the density stopping coefficient $\alpha_{\rho}$, which determines when the selector terminates soft evidence augmentation during online inference.
\begin{table}[h]
\centering
\vspace{-1.5mm}
\footnotesize
\setlength{\tabcolsep}{3pt}
\renewcommand{\arraystretch}{0.82}
\vspace{-1.5mm}
\caption{Sensitivity analysis of $\alpha_{\rho}$ with Qwen3-VL 8B. Tok and Img denote average total tokens and selected images per query.}
\label{tab:alpha_sensitivity}
\vspace{-1.5mm}
\begin{tabular}{llcccc}
\toprule
Dataset & Metric
& $\alpha_{\rho}=0.8$
& $\alpha_{\rho}=0.9$
& $\alpha_{\rho}=1.0$
& $\alpha_{\rho}=1.1$ \\
\midrule

\multirow{4}{*}{MMLB-Doc}
& ACC & 52.40 & 52.00 & \textbf{53.20} & 52.60 \\
& F1  & 42.10 & 41.10 & \textbf{42.30} & 42.00 \\
& Tok & 17.10K & 15.79K & 14.25K & 13.63K \\
& Img & 9.81 & 8.76 & 7.52 & 7.05 \\

\midrule
\multirow{4}{*}{LDU}
& ACC & 61.20 & 61.30 & \textbf{61.70} & 61.40 \\
& F1  & 62.50 & 62.60 & \textbf{63.00} & 62.70 \\
& Tok & 19.70K & 17.28K & 15.04K & 14.25K \\
& Img & 10.95 & 9.22 & 7.55 & 7.01 \\

\midrule
\multirow{3}{*}{DocBench}
& ACC & 73.79 & 74.53 & \textbf{75.80} & 74.49 \\
& Tok & 15.95K & 14.90K & 13.64K & 14.33K \\
& Img & 8.37 & 7.49 & 6.37 & 7.14 \\

\bottomrule
\end{tabular}
\vspace{-1.5mm}
\end{table}
Table~\ref{tab:alpha_sensitivity} reports the online sweep results on MMLongBench-Doc, LongDocURL, and DocBench with $\alpha_{\rho}\in\{0.8,0.9,1.0,1.1\}$. Overall, \modelname{} is not sensitive to delicate tuning of $\alpha_{\rho}$. Across the evaluated benchmarks, varying $\alpha_{\rho}$ yields a stable accuracy-efficiency trade-off: increasing $\alpha_{\rho}$ generally reduces the average input cost, while overly aggressive stopping may slightly degrade answer quality.
We therefore set $\alpha_{\rho}=1.0$ as the default value, since it reduces unnecessary input cost while preserving competitive answer quality across the evaluated benchmarks.

\begin{figure*}[!t]
\centering
\includegraphics[width=0.92\textwidth]{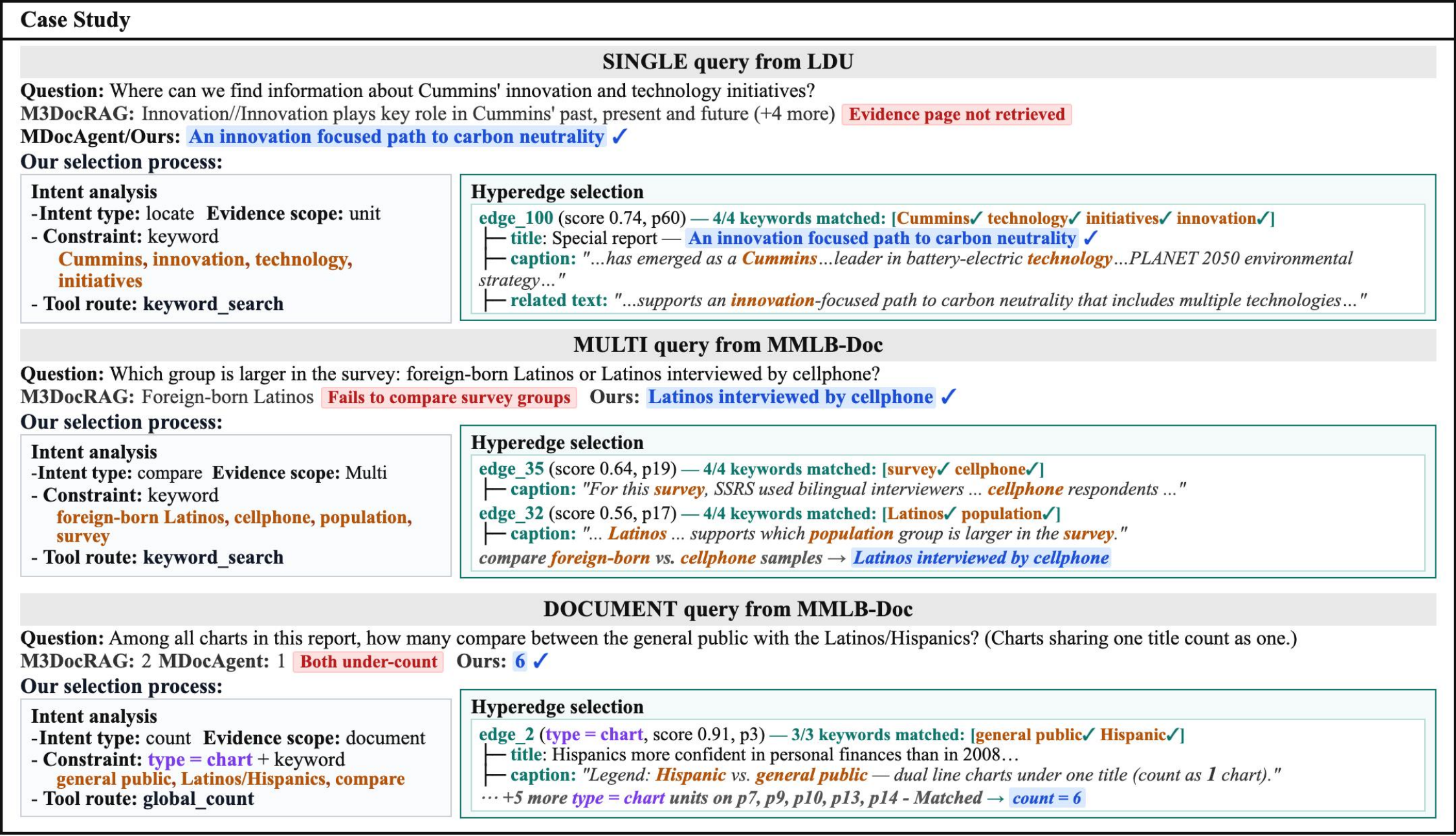}
\caption{Case study illustrating \textsc{single}-evidence localization, \textsc{multi}-evidence composition, and \textsc{document}-level aggregation under the Qwen3-VL 8B setting.}
\label{fig:case_study}
\vspace{-3mm}
\end{figure*}

\subsection{Case Study}
\label{sec:qualitative}
\vspace{-0.8mm}
To demonstrate how budget-aware evidence selection adapts to different evidence requirements, we use representative cases to examine how \modelname{} constructs query-aware evidence at inference time.
As shown in Figure~\ref{fig:case_study}, the \textsc{single} case shows that structural cues such as section titles can be essential even when they are weak under page-level similarity. \modelname{} preserves such cues by binding the target evidence with its contextual and structural relations.
The \textsc{multi} case requires composing complementary evidence units under a shared comparison constraint. Even when a relevant page is retrieved, page-centric VLM pipelines may still struggle to stitch isolated evidence fragments together and align them under the same survey context, but \modelname{} retrieves and combines the corresponding hyperedges with the goal of maximizing query constraint coverage.
The \textsc{document} case requires global aggregation over chart-type evidence, where fixed-budget retrieval under-counts because it only covers a partial set of relevant charts.
By invoking document-scope selection over matched hyperedges, \modelname{} avoids this coverage bottleneck and recovers the correct count.
Together, these cases show that \modelname{} improves evidence construction by changing the retrieval unit from isolated segments to structured evidence relations.

\vspace{-0.5mm}
\section{Related Work}
\vspace{-0.5mm}

\subsection{Document RAG}
\vspace{-0.8mm}
\newcommand{\cmark}{\checkmark}
\newcommand{\nmark}{\textcolor{gray}{\scriptsize --}}
Recent benchmarks such as MMLongBench-Doc~\cite{maMMLongBenchDocBenchmarkingLongcontext2024b} have extended document QA from earlier short-document settings to lengthy, visually rich documents that often exceed VLM input limits, making RAG a practical way to select compact evidence for downstream reasoning.
Early document QA methods mainly represent documents as plain text or OCR-parsed text segments, which are then used as retrieval units in text-based RAG pipelines.
Methods including PDFTriage~\cite{saad-falconPDFTriageQuestionAnswering2023b} further preserve the original layout and document metadata, retrieving context based on structure or content rather than text alone.
Although these approaches successfully fit long-context documents into the LLM context window, they still lose visual context that is important for figures, tables, charts, and layout-dependent questions.

With the rise of multimodal LLMs, document RAG methods have increasingly incorporated visual evidence to handle questions involving figures, tables, charts, and layout-dependent information. Page-level multimodal RAG methods retrieve relevant document pages as images for downstream VLM reasoning, improving page selection through multimodal retrieval, adaptation, reranking, or iterative refinement~\cite{choM3DocRAGMultimodalRetrieval2024b,DBLP:conf/iclr/ChenZZYDGRC025,jainSimpleDocMultiModalDocument2025,yu2025visragvisionbasedretrievalaugmentedgeneration}.
Beyond image-only retrieval, recent systems combine visual and textual evidence by gathering OCR-parsed chunks, page images, or modality-specific outputs into the final evidence set~\cite{gongMHierRAGMultiModalRAG2025,suriVisDoMMultiDocumentQA2025b,hanMDocAgentMultiModalMultiAgent2025b,dongDocResearcherUnifiedSystem2025b}.
Earlier visually rich information extraction methods exploit document layout, with VS2~\cite{Sarkhel2019VisualSegmentation} segmenting documents into logical blocks and Artemis~\cite{Sarkhel2021ImprovingIE} encoding visual spans with textual and layout context.
Despite these advances, most multimodal document RAG methods still construct retrieval around individual evidence units, such as pages, image regions, or modality-specific chunks.
Such units are often insufficient for long-document queries that require structural cues, cross-modal bindings, or evidence aggregation across multiple document regions.

To address this limitation, another line of document-centric work constructs structured evidence spaces from document hierarchy, layout, and logical dependencies.
For structured document analytics, ZenDB~\cite{linQueryingTemplatizedDocument2025} extracts semantic hierarchical trees from templatized documents and uses them as semantic indexes to support accurate and cost-effective document analytics. Along a similar database-oriented direction, QUEST~\cite{sunQUESTQueryOptimization2025} and DocETL~\cite{shankarDocETLAgenticQuery2025} leverage LLMs to generate execution plans for document processing, treating unstructured document analysis as a query optimization problem.
More recently, BookRAG~\cite{wangBookRAGHierarchicalStructureaware2025} transforms raw documents into structured hierarchical trees and further populates a knowledge graph by extracting and refining entities from tree nodes.
For visually rich document QA, MoLoRAG constructs a page-level graph using a similarity threshold and performs VLM-guided graph traversal for retrieval~\cite{wuMoLoRAGBootstrappingDocument2025}.
These studies show that explicit document structures are useful for organizing evidence, but they are typically used for hierarchy-aware retrieval, planning, or page-level graph traversal, rather than for selecting multimodal evidence under query-specific input budgets. As summarized in Table~\ref{tab:evidence_construction_aspects}, most existing methods rely on fixed top-$K$ retrieval cutoffs or LLM-based selection policies, whereas \modelname{} formulates evidence selection over a structured hypergraph and controls the input budget through a cost-aware greedy algorithm.
\vspace{-0.8mm}
\subsection{Structured RAG}
Recent studies have explored graphs and knowledge graphs as retrieval indexes for RAG, with GraphRAG~\cite{edgeLocalGlobalGraph2025b,hanRetrievalAugmentedGenerationGraphs2025b,yangSuperRAGRAGLayoutAware2025,zhaovog,zhuOpenMELUnsupervisedMultimodal2025d} supporting retrieval over entities, relations, neighborhoods, and communities while preserving connectivity beyond independent chunks.
Similarly, Docs2KG constructs knowledge graphs from heterogeneous documents and web pages~\cite{sunDocs2KGUnifiedKnowledge2024}.
These methods show that explicit relational indexes can improve evidence organization and query-time retrieval coverage~\cite{wangbeyond}.
However, ordinary graph indexes mainly encode pairwise edges, which can be insufficient when the natural retrieval unit is a group of mutually dependent items. Recent works on knowledge hypergraphs and hypergraph-enhanced RAG~\cite{Lee_2020,yang2023hgmatchmatchbyhyperedgeapproachsubgraph, 8509253, fengHyperRAGCombatingLLM2026,luoHyperGraphRAGRetrievalAugmentedGeneration2025} show that higher-order structures can be represented as hyperedges connecting multiple nodes, benefiting retrieval and reasoning over complex relational knowledge.
These studies demonstrate the benefit of higher-order indexing for open-domain RAG and suggest its potential for organizing multimodal evidence in long-context document QA.
\begin{table*}[!t]
\centering
\footnotesize
\setlength{\tabcolsep}{3pt}
\renewcommand{\arraystretch}{0.80}
\caption{Comparison of evidence construction and selection strategies in existing document RAG methods.}
\label{tab:evidence_construction_aspects}
\vspace{-2mm}
\begin{tabular}{>{\raggedright\arraybackslash}p{0.18\textwidth}
>{\centering\arraybackslash}p{0.18\textwidth}
>{\centering\arraybackslash}p{0.20\textwidth}
>{\centering\arraybackslash}p{0.20\textwidth}
>{\centering\arraybackslash}p{0.18\textwidth}}
\toprule
{\centering\arraybackslash\textbf{Method}}
& {\centering\arraybackslash\textbf{Evidence Unit}}
& {\centering\arraybackslash\textbf{Hierarchical Structure}}
& {\centering\arraybackslash\textbf{Cross-modal Binding}}
& {\centering\arraybackslash\textbf{Selection Policy}} \\
\midrule

\multicolumn{5}{l}{\textit{Independent evidence construction}} \\
M3DocRAG~\cite{choM3DocRAGMultimodalRetrieval2024b}  & Page & \nmark & \nmark & Top-$K$ cutoff \\
MDocAgent~\cite{hanMDocAgentMultiModalMultiAgent2025b} & Page & \nmark & \cmark & Top-$K$ cutoff \\
\midrule
\multicolumn{5}{l}{\textit{Structured evidence construction}} \\
QUEST~\cite{sunQUESTQueryOptimization2025}    & Segment & \cmark & \nmark & LLM-generated plan \\
DocETL~\cite{shankarDocETLAgenticQuery2025}   & Chunk & \cmark & \nmark & LLM-generated plan \\
ZenDB~\cite{linQueryingTemplatizedDocument2025}    & Tree Node & \cmark & \nmark & Predefined template \\
BookRAG~\cite{wangBookRAGHierarchicalStructureaware2025}  & Tree / KG Node & \cmark & \cmark & LLM-generated plan \\
MHier-RAG~\cite{gongMHierRAGMultiModalRAG2025} & Page / Segment & \cmark & \cmark & Top-$K$ cutoff \\
MoLoRAG~\cite{wuMoLoRAGBootstrappingDocument2025}  & Graph Node & \nmark & \cmark & Top-$K$ cutoff \\

\midrule
\cellcolor{LLightBlue}\textbf{Ours}
& \cellcolor{LLightBlue}\textbf{Hyperedge}
& \cellcolor{LLightBlue}\cmark
& \cellcolor{LLightBlue}\cmark
& \cellcolor{LLightBlue}Cost-aware selection \\
\bottomrule
\end{tabular}
\vspace{-3.5mm}
\end{table*}

\vspace{-1mm}
\section{Conclusion}
In this paper, we formulate long-context multimodal document QA as a budget-aware evidence management problem and propose \modelname{}, a document RAG framework for adaptive evidence selection over lengthy multimodal content.
Specifically, with a document hypergraph preserving structural containment and cross-modal relations among multi-granular elements, \modelname{} formulates evidence selection as an NP-hard coverage maximization problem and employs a cost-aware greedy process that adapts to each query's demand.
Across representative benchmarks, \modelname{} consistently improves answer quality over diverse baselines with a favorable cost-latency trade-off, providing a unified pipeline for scalable multimodal document analytics.

\section{Acknowledgments}
Lei Chen's work is partially supported by National Key Research and Development Program of China Grant No. 2023YFF0725100, National Science Foundation of China (NSFC) under Grant No. U22B2060, Guangdong-Hong Kong Technology Innovation Joint Funding Scheme Project No. 2024A0505040012, the Hong Kong RGC GRF Project 16213620, RIF Project R6020-19, AOE Project AoE/E-603/18, Theme-based project TRS T41-603/20R, CRF Project C2004-21G, Key Areas Special Project of Guangdong Provincial Universities 2024ZDZX1006, Guangdong Province Science and Technology Plan Project 2023A0505030011, Guangzhou municipality big data intelligence key lab, 2023A03J0012, Hong Kong ITC ITF grants MHX/078/21 and PRP/004/22FX, Hong Kong ITC TC-SKLCRCC26EG01, Zhujiang scholar program 2021JC02X170, Microsoft Research Asia Collaborative Research Grant, HKUST-Webank joint research lab, 2025 HKUST Shenzhen-Hong Kong Collaborative Innovation Institute Green Sustainability Special Fund from Shui On Xintiandi and the InnoSpace GBA, and HKUST(GZ)-CMCC(Guangzhou Branch) Metaverse Joint Innovation Lab under Grant No. P00659.
Jiachuan Wang's work is supported in part by JST CREST (JPMJCR22M2). Jiake Ge's work is supported by the National Natural Science Foundation of China (No.62602465). Haoyang Li's work is supported by Guangdong Basic and Applied Basic Research Foundation 2026A1515010227, PolyU TDG25-28/TBP/11-IICA, PolyU PGMS (P0052504, P0053707, P0061736, P0064506, P0064525, P0064570, P0057770). Yongqi Zhang's work is supported by Guangdong Basic and Applied Basic Research Foundation 2025A1515010304, Guangdong Province Project 2024QN11X088, Guangzhou Science and Technology Planning Project 2025A03J4491. 
 \newpage
\section*{AI-Generated Content Disclosure}
OpenAI Codex was used as an assistive tool for manuscript polishing and formatting, as well as for code execution, debugging, and visualization in the experiments reported in Section~\ref{sec:experiments}. All AI-assisted outputs were reviewed and, where appropriate, revised by the authors to ensure accuracy and consistency with the underlying data, methods, and intended presentation. The authors remain fully responsible for the integrity and originality of the scientific content.

\bibliographystyle{IEEEtran}
\bibliography{references}

\end{document}